\theoremstyle{plain}
\newtheorem{theorem}{Theorem}[section]
\theoremstyle{definition}
\newtheorem{remark}[theorem]{Remark}
\newtheorem{definition}[theorem]{Definition}
\newcommand{\abs}[1]{\lvert#1\rvert}
\newcommand{\norm}[1]{\lVert#1\rVert}
\newcommand{\bigabs}[1]{\bigl\lvert#1\bigr\rvert}
\newcommand{\term}[1]{{\textit{\textbf{#1}}}}
\author[N. Gao]{Niushan Gao}
\address{School of Mathematics, Southwest Jiaotong University, Chengdu, Sichuan, 610000, China.}
\email{ngao@home.swjtu.edu.cn}
\author[F. Xanthos]{Foivos Xanthos}
\address{Department of Mathematics, Ryerson University, 350 Victoria St., Toronto, ON, M5B 2K3, Canada.}
\email{foivos@ryerson.ca}
\begin{document}

\title[$w^*$-representations of risk measures]{On the C-property and $w^*$-representations of risk measures}
\date{\today}

\keywords{Risk measures, C-property, $w^*$-representation}
\subjclass[2010]{Primary: 91G40. Secondary: 91G80, 46B42, 46A20}
\thanks{The authors were in part supported by an NSERC grant.}
\maketitle

\begin{abstract}
We identify a large class of Orlicz spaces $L_\Phi(\mu)$ for which the topology $\sigma(L_\Phi(\mu),L_\Phi(\mu)_n^\sim)$ fails the C-property introduced in \cite{Biagini:10}.
We also establish a variant of the C-property and use it to prove a $w^*$-representation theorem for proper convex increasing functionals, satisfying a suitable version of Delbaen's Fatou property, on Orlicz spaces $L_\Phi(\mu)$ with $\lim_{t\rightarrow\infty}\frac{\Phi(t)}{t}=\infty$. Our results apply, in particular, to risk measures on all Orlicz spaces $L_\Phi(\mathbb{P})$ other than $L_1(\mathbb{P})$.
\end{abstract}

\section{Introduction}

The notion of coherent risk measures was introduced by Artzner et al in \cite{Art:99}. It was later extended to the more general notion of convex and monetary risk measures (see e.g. \cite[Chapter~4]{Follmer:04} and the references therein).
An important topic in the theory of risk measures is to study when the measures under investigation admit certain robust representations, and as risk measures have convexity, a lot of efforts have been devoted to the more general study of representations of proper convex increasing functionals. For example, Delbaen's classical representation theorems (\cite[Theorems~2.3 and 3.2]{Delbaen:02}) on $L_\infty(\mathbb{P})$ can be rephrased as follows (cf.~\cite[Remark~4.17 and Theorem~4.31]{Follmer:04}).

\begin{theorem}\label{del}
\begin{enumerate}
\item\label{del1} Any convex increasing functional $\phi:L_\infty(\mathbb{P})\rightarrow \mathbb{R}$ admits the representation
$\phi(f)=\sup_{\mathbb{Q} \in (L_\infty(\mathbb{P})^*)_+} (\langle \mathbb{Q},f\rangle-\phi^*(\mathbb{Q}))$ for any $f\in L_\infty(\mathbb{P})$,
where $\phi^*(\mathbb{Q})=\sup_{f\in L_\infty(\mathbb{P})}(\langle \mathbb{Q},f\rangle -\phi(f))$ for any $\mathbb{Q}\in (L_\infty(\mathbb{P})^*)_+$.
\item\label{del2} A proper convex increasing functional $\phi:L_\infty(\mathbb{P})\rightarrow(-\infty,\infty]$ admits the representation
$\phi(f)=\sup_{g \in L_1(\mathbb{P})_+} (\langle g,f\rangle-\phi^*(g))$ for any $f\in L_\infty(\mathbb{P})$, where $\phi^*(g)=\sup_{f\in L_\infty(\mathbb{P})}(\langle g,f\rangle -\phi(f))$ for any $g \in L_1(\mathbb{P})_+$,
iff $\phi$ satisfies the \emph{Fatou property}: $\phi(f)\leq \liminf_n \phi(f_n)$ for any $f\in L_\infty(\mathbb{P})$ and any bounded sequence $(f_n)$ in $L_\infty(\mathbb{P})$ such that $f_n\xrightarrow{a.e.~}f$.
\end{enumerate}
\end{theorem}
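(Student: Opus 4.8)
The plan is to read both parts as the Fenchel--Moreau biconjugation theorem, applied in two different dual pairings, followed by a short monotonicity argument that confines the supremum to the positive cone. Throughout, $\langle\cdot,\cdot\rangle$ denotes the pairing in force.

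For part~\ref{del1} I would first promote the hypotheses to norm-continuity. Since $\phi$ is increasing and finite-valued, every $f$ in the unit ball satisfies $-\one\le f\le\one$ and hence $\phi(-\one)\le\phi(f)\le\phi(\one)$, so $\phi$ is bounded on a ball; a finite convex functional bounded on a ball is locally Lipschitz, in particular norm-lower-semicontinuous. The Fenchel--Moreau theorem for the pairing $\langle L_\infty(\mathbb{P})^*,L_\infty(\mathbb{P})\rangle$ then gives $\phi=\phi^{**}$, that is $\phi(f)=\sup_{\mathbb{Q}\in L_\infty(\mathbb{P})^*}(\langle\mathbb{Q},f\rangle-\phi^*(\mathbb{Q}))$. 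To discard the non-positive functionals, note that if $\mathbb{Q}$ is not positive there is some $h\ge 0$ with $\langle\mathbb{Q},h\rangle<0$; evaluating $\phi^*$ at $f-th$ and letting $t\to\infty$, while using $\phi(f-th)\le\phi(f)$, forces $\phi^*(\mathbb{Q})=+\infty$. Such $\mathbb{Q}$ never contribute, so the supremum may be taken over $(L_\infty(\mathbb{P})^*)_+$.

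For part~\ref{del2} the only structural change is that the full dual is replaced by $L_1(\mathbb{P})$, so the goal becomes $\phi=\phi^{**}$ in the pairing $\langle L_1(\mathbb{P}),L_\infty(\mathbb{P})\rangle$; the subsequent restriction to $L_1(\mathbb{P})_+$ is verbatim the monotonicity argument above. Since each affine map $f\mapsto\langle g,f\rangle-c$ with $g\in L_1(\mathbb{P})$ is $\sigma(L_\infty(\mathbb{P}),L_1(\mathbb{P}))$-continuous, the representation holds precisely when $\phi$ is proper, convex and $\sigma(L_\infty(\mathbb{P}),L_1(\mathbb{P}))$-lower-semicontinuous (the forward implication is that a supremum of continuous affine functionals is lower-semicontinuous, the backward one is Fenchel--Moreau). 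The whole statement therefore reduces to the equivalence of the Fatou property with $\sigma(L_\infty(\mathbb{P}),L_1(\mathbb{P}))$-lower-semicontinuity. One direction is immediate: if $f_n\to f$ a.e.\ with $\sup_n\norm{f_n}_\infty<\infty$, dominated convergence gives $f_n\to f$ in $\sigma(L_\infty(\mathbb{P}),L_1(\mathbb{P}))$, so lower-semicontinuity yields $\phi(f)\le\liminf_n\phi(f_n)$.

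The main obstacle is the converse, that the Fatou property forces every sublevel set $C_c=\{f:\phi(f)\le c\}$ to be $\sigma(L_\infty(\mathbb{P}),L_1(\mathbb{P}))$-closed. The Fatou property yields directly only that $C_c$ is convex and closed under bounded a.e.-convergent sequences, which is a priori weaker than weak$^*$-closedness. The bridge is the Krein--\v{S}mulian theorem: as $L_\infty(\mathbb{P})=L_1(\mathbb{P})^*$, the convex set $C_c$ is weak$^*$-closed as soon as $C_c\cap rB_{L_\infty}$ is weak$^*$-closed for every $r>0$. To check the latter I would take $f_n\in C_c$ with $\norm{f_n}_\infty\le r$ and $f_n\to f$ weak$^*$; pairing against $L_2(\mathbb{P})\subseteq L_1(\mathbb{P})$ shows $f_n\to f$ weakly in $L_2(\mathbb{P})$, so Mazur's lemma produces convex combinations converging to $f$ in $L_2(\mathbb{P})$, hence $\mathbb{P}$-a.e.\ along a subsequence. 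These convex combinations lie in $C_c$ and are bounded by $r$, so the Fatou property places $f\in C_c$. Krein--\v{S}mulian then promotes this to weak$^*$-closedness of $C_c$, so $\phi$ is $\sigma(L_\infty(\mathbb{P}),L_1(\mathbb{P}))$-lower-semicontinuous, and Fenchel--Moreau together with the cone-restriction completes the proof.
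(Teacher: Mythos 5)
The paper does not prove Theorem~\ref{del}; it quotes it from Delbaen \cite{Delbaen:02} and F\"ollmer--Schied \cite{Follmer:04}, so there is no internal proof to compare against. Your proposal follows the standard route for this result: Fenchel--Moreau biconjugation in the appropriate dual pair, a monotonicity argument to cut the supremum down to the positive cone, and Krein--\v{S}mulian to reduce weak$^*$-closedness of the sublevel sets to that of their bounded slices. The following parts are correct as written: the local Lipschitz continuity of a finite increasing convex functional in part~(1), the argument that $\phi^*(\mathbb{Q})=+\infty$ for non-positive $\mathbb{Q}$, the reduction of part~(2) to the equivalence of the Fatou property with $\sigma(L_\infty(\mathbb{P}),L_1(\mathbb{P}))$-lower semicontinuity, and the easy direction of that equivalence via dominated convergence.

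The one genuine gap is in the verification that $C_c\cap rB_{L_\infty}$ is weak$^*$-closed. You take a \emph{sequence} $f_n\in C_c\cap rB_{L_\infty}$ with $f_n\to f$ weak$^*$, which only establishes weak$^*$-\emph{sequential} closedness. The theorem is stated for an arbitrary probability space, so $L_1(\mathbb{P})$ need not be separable, the weak$^*$ topology on $rB_{L_\infty}$ need not be metrizable, and sequential weak$^*$-closedness of a bounded convex set does not in general imply weak$^*$-closedness. The repair is standard and stays within your toolbox: for $f$ in the weak$^*$-closure of $K:=C_c\cap rB_{L_\infty}$, note that the $\sigma(L_\infty,L_1)$-closure of $K$ is contained in its $\sigma(L_2,L_2)$-closure, and by Mazur's theorem applied to the convex \emph{set} (rather than Mazur's lemma applied to a prescribed weakly convergent sequence) the latter equals the $L_2$-norm closure; this produces a sequence in $K$ converging to $f$ in $L_2(\mathbb{P})$, hence a.e.\ along a subsequence, and the Fatou property then places $f$ in $K$ exactly as you argue. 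This set-level use of Mazur is the same mechanism the paper employs in Theorem~\ref{uo-wstar} to extract an a.e.-convergent sequence from a point of a weak$^*$-closure. With that adjustment your proof is complete.
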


Such representations have been extensively studied by various authors for function spaces beyond $L_\infty(\mathbb{P})$, as most models in finance and insurance mathematics involve unbounded random variables; see e.g.~\cite{Arai:10,Arai:14,Biagini:10,Biagini:11,Cheridito:09,Delbaen:09,Fri:02,Jouini06,Orihuela:12}.
In particular, Theorem~\ref{del}\eqref{del1} has been fully generalized to proper convex increasing functionals on Banach lattices (\cite[Theorem~4.1]{Cheridito:09}) and on Frechet lattices (\cite[Theorem~1]{Biagini:10}).
The extension of Theorem~\ref{del}\eqref{del2} is subtler. It requires a suitable generalization of Delbaen's Fatou property using the lattice theory terminology. Note that the boundedness of a sequence $(f_n)$ in $L_\infty(\mathbb{P})$ has two equivalent interpretations: norm boundedness, or order boundedness (i.e.~there exists $F\in L_\infty(\mathbb{P})$ such that $\abs{f_n}\leq F$ for all $n\geq 1$). Note also that a sequence in a function space is order bounded and a.e.~convergent iff it is order convergent. In view of these, the authors of \cite{Biagini:10} interpreted the conditions in Delbaen's Fatou property as $f_n\xrightarrow{o}f$ in $L_\infty(\mathbb{P})$, and established the following result which has received significant attention in the mathematical finance literature.

\begin{theorem}\cite[Proposition~1]{Biagini:10}\label{biaa}
Let $\phi:X\rightarrow(-\infty,\infty]$ be a proper convex increasing functional on a Banach lattice $X$. Let $X_n^\sim$ be the order continuous dual of $X$. Suppose that the topology $\sigma(X,X_n^\sim)$ satisfies the $C$-property. Then $\phi$ admits the representation $\phi(x)=\sup_{x^*\in (X_n^\sim)_+}(\langle x^*,x\rangle-\phi^*(x^*))$ for any $x\in X$, where $\phi^*(x^*)=\sup_{x\in X}(\langle x^*,x\rangle-\phi(x))$ for any $x^*\in (X_n^\sim)_+$, iff $\phi$ is $\sigma$-order lower semi-continuous, i.e.~$\phi(x)\leq \liminf \phi(x_n)$ whenever $x_n\xrightarrow{o}x$ in $X$.
\end{theorem}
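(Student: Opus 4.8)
The plan is to recognize the claimed representation as the statement that $\phi$ equals its biconjugate $\phi^{**}$ relative to the dual pair $\langle X,X_n^\sim\rangle$, and then to reduce the theorem to a clean equivalence between topological and order lower semicontinuity that the C-property bridges. First I would record that, because $\phi$ is increasing, its conjugate $\phi^*$ is finite only on the positive cone: if $x^*\in X_n^\sim$ is not positive, pick $z\ge 0$ with $\langle x^*,z\rangle<0$ and a point $x_0$ in the nonempty effective domain of $\phi$; then $\langle x^*,x_0-tz\rangle-\phi(x_0-tz)\ge \langle x^*,x_0\rangle-t\langle x^*,z\rangle-\phi(x_0)\to+\infty$ as $t\to\infty$, whence $\phi^*(x^*)=+\infty$. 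Consequently $\sup_{x^*\in X_n^\sim}(\langle x^*,x\rangle-\phi^*(x^*))=\sup_{x^*\in (X_n^\sim)_+}(\langle x^*,x\rangle-\phi^*(x^*))$, so the asserted representation is exactly the identity $\phi=\phi^{**}$ for the pairing $\langle X,X_n^\sim\rangle$.

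Next I would invoke the Fenchel--Moreau theorem for $X$ equipped with $\sigma(X,X_n^\sim)$, whose continuous dual is $X_n^\sim$: a proper convex functional $\phi$ satisfies $\phi=\phi^{**}$ if and only if it is $\sigma(X,X_n^\sim)$-lower semicontinuous. This collapses the theorem to the single equivalence that $\phi$ is $\sigma(X,X_n^\sim)$-lsc if and only if $\phi$ is $\sigma$-order lsc. The forward implication is soft and needs no C-property: every $x^*\in X_n^\sim$ is order continuous, so $x_n\xrightarrow{o}x$ forces $\langle x^*,x_n\rangle\to\langle x^*,x\rangle$, i.e.\ $x_n\to x$ in $\sigma(X,X_n^\sim)$; since topological lower semicontinuity implies sequential lower semicontinuity, $\sigma(X,X_n^\sim)$-lsc yields $\phi(x)\le\liminf_n\phi(x_n)$ along every order-convergent sequence.

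The reverse implication is where the C-property does the real work, and this is the step I expect to be the main obstacle. Assuming $\phi$ convex and $\sigma$-order lsc, I would show that each sublevel set $C_c:=\{x:\phi(x)\le c\}$ is $\sigma(X,X_n^\sim)$-closed, which is equivalent to $\sigma(X,X_n^\sim)$-lower semicontinuity. The set $C_c$ is convex (convexity of $\phi$) and $\sigma$-order closed (if $x_n\in C_c$ and $x_n\xrightarrow{o}x$ then $\phi(x)\le\liminf_n\phi(x_n)\le c$). To pass from order closedness to topological closedness, take $x$ in the $\sigma(X,X_n^\sim)$-closure of $C_c$, choose a net in $C_c$ converging to $x$, and apply the C-property to extract a sequence $(y_k)$ of convex combinations of the net with $y_k\xrightarrow{o}x$; convexity keeps $y_k\in C_c$, and $\sigma$-order closedness then forces $x\in C_c$.

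The delicate point is precisely this interchange between topological closure and order convergence. A point of the $\sigma(X,X_n^\sim)$-closure of a convex set need not be the limit of any sequence from the set, so one cannot argue by bare sequential closedness; it is exactly the C-property that repairs this by producing order-convergent convex combinations, after which convexity and order-closedness do the rest. Thus the whole argument hinges on correctly extracting, from the given net in $C_c$, the order-convergent convex combinations guaranteed by the C-property, and on the observation that convexity of the sublevel set is what lets those combinations remain inside $C_c$.
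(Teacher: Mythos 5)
Your argument is correct. The paper does not actually prove Theorem~\ref{biaa} (it is quoted from Biagini--Frittelli), but your route --- monotonicity forcing $\phi^*=+\infty$ off the positive cone so that the representation is $\phi=\phi^{**}$ for the pairing $\langle X,X_n^\sim\rangle$, Fenchel--Moreau to reduce to $\sigma(X,X_n^\sim)$-lower semicontinuity, the soft forward implication from order continuity of each $x^*\in X_n^\sim$, and the C-property applied to the convex sublevel sets for the converse --- is exactly the scheme the paper itself follows in proving its analogous Theorem~\ref{orlicz-rep}, so this is essentially the intended proof.
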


The C-property introduced in \cite{Biagini:10} can be equivalently rephrased as follows.

\begin{definition}{\cite[Definition~3]{Biagini:10}}\label{CP}
A linear topology $\tau$ on a vector lattice $X$ is said to have the
\term{C-property} if for any convex set $C$ in $X$ and any
$x\in\overline{C}^\tau$, there exists a sequence $(x_n)$ in $C$
such that $x_n\xrightarrow{o}x$.
\end{definition}

It was claimed in \cite[Corollary~4]{Biagini:10} that the topology $\sigma(X,X^\sim_n)$ has the C-property whenever $X$ is an ideal in some $L_1(\mu)$-space, and in particular, when $X$ is an Orlicz space over a finite measure space (\cite[p.~18]{Biagini:10}).
However, as was observed in \cite[Remark~1.5]{Owari:14}, the proof of \cite[Lemma~6]{Biagini:10} has a gap, and thus it is not clear whether $\sigma(X,X^\sim_n)$ has the C-property even when $X=L_\infty(\mathbb{P})$ or other Orlicz spaces. As a consequence, it is not clear whether Theorem~\ref{biaa} extends Theorem~\ref{del}\eqref{del2}.

\bigskip

In this note, we prove that if the conjugate function $\Psi$ of an Orlicz function $\Phi$ satisfies the $\Delta_2$-condition,
then the topology $\sigma(L_\Phi(\mu),L_\Phi(\mu)_n^\sim)$ satisfies the C-property iff $L_\Phi(\mu)$ is reflexive.
It follows that $\sigma(L_\infty(\mathbb{P}),L_\infty(\mathbb{P})_n^\sim)$ satisfies the C-property iff $L_\infty(\mathbb{P})$ is finite-dimensional. As a consequence, Theorem~\ref{biaa} does not extend Theorem~\ref{del}\eqref{del2}.
We also establish a variant of the C-property for Orlicz spaces $L_\Phi(\mu)$ with $\lim_{t\rightarrow\infty}\frac{\Phi(t)}{t}=\infty$ with respect to the $w^*$-topology, and apply this variant to establish a $w^*$-representation theorem for proper convex increasing functionals on such Orlicz spaces that extends Theorem~\ref{del}\eqref{del2}.

It deserves pointing out that our result suggests that one may understand the boundedness condition in Delbaen's Fatou property as norm boundedness.

\subsection*{Notations and Facts}
We refer to \cite{ALIP:06} for all unexplained terminology, notations and standard facts on vector and Banach lattices.
It is well-known that the norm dual, $X^*$, of a Banach lattice $X$ equals its order dual, $X^\sim$ (\cite[Corollary~4.5]{ALIP:06}). The \term{order continuous dual}, $X_n^\sim$, of $X$ is the collection of all linear functionals $x^*\in X^*$ which are order continuous, i.e.~$x^*(x_\alpha)\rightarrow 0$ whenever $x_\alpha\xrightarrow{o} 0$ in $X$. It is well-known that $X \subset (X^*)_n^\sim$. Indeed, for any $x\in X$, if $x^*_\alpha\downarrow 0$ in $X^*$, then $\langle x^*_\alpha,x\rangle=x^*_\alpha(x_+)-x^*_\alpha(x_-)\rightarrow 0$ by \cite[Theorem~1.18]{ALIP:06}, and thus $x\in (X^*)_n^\sim$ by \cite[Theorem~1.56]{ALIP:06}. A Banach lattice $X$ is said to be \term{order continuous} if $x_\alpha\downarrow 0$ in $X$ implies $\norm{x_\alpha}\downarrow0$, or equivalently, if $X^*=X_n^\sim$ (\cite[Theorem~2.4.2]{Meyer-Nieberg:91}), and is called a \term{KB-space} if every norm bounded increasing sequence in $X_+$ is norm convergent, or equivalently, if $X=(X^*)_n^\sim$ (\cite[Theorem~4.60]{ALIP:06}). A KB-space is order continuous (cf.~the paragraph following \cite[Definition~4.58]{ALIP:06}); a dual Banach lattice is KB iff it is order continuous (\cite[Theorem~4.59]{ALIP:06}).

We refer to \cite[Chapter 2]{Edgar:02} for all the terminology and facts on Orlicz spaces used in this note. \term{Throughout this note, $\mu$ (resp.~$\mathbb{P}$) stands for a $\sigma$-finite (resp.~probability) measure over some measurable space $(\Omega,\mathscr{F})$}.
Recall that a function $\Phi:[0,\infty) \rightarrow[0,\infty]$ is called an \term{Orlicz function} if it is left continuous, increasing, convex and non-trivial and $\Phi(0)=0$. 
We say that an Orlicz function $\Phi$ satisfies the \term{$\Delta_2$-condition} at $\infty$ (resp.~at $0$) if there exist $u_0 \in (0,\infty)$ and $k \in \mathbb{R}$ such that $\Phi(2u)<k \Phi(u)$ for all $u \geq u_0$ (resp.~for all $u\leq u_0$). The \term{conjugate}, $\Psi$, of $\Phi$ is also an Orlicz function and is defined by $\Psi(s)=\sup\{ts-\Phi(t) : t \geq 0\}$. The \term{Orlicz space} $L_\Phi(\mu)$ is the space of all a.e.~real-valued measurable functions $f$ (modulo a.e.~equality) such that $\norm{f}_\Phi:=\inf\left\{\lambda>0:\int_\Omega\Phi\left(\frac{|f|}{\lambda}\right)\mathrm{d}\mu\leq 1\right\}<\infty$. This norm $||\cdot||_{\Phi}$ on $L_\Phi(\mu)$ is called the \term{Luxemburg norm} and is equivalent to the \term{Orlicz norm} (cf.~\cite[p.\,60-61]{Edgar:02}). The set $H_\Phi(\mu)$ of all $f \in L_\Phi(\mu)$ such that $\int_\Omega\Phi\left(\frac{|f|}{\lambda}\right)\mathrm{d}\mu<\infty$ for all $\lambda>0$ is called the heart of $L_\Phi(\mu)$. The spaces $L_\Phi(\mu)$ and $H_\Phi(\mu)$ are Banach lattices.

\section{Results}\label{sec2}

In this section, we will consider only dual Orlicz spaces $L_\Phi(\mu)$. Recall that $\Psi$ is finite-valued iff $\lim_{t\rightarrow \infty}\frac{\Phi(t)}{t}=\infty$ (\cite[p.\,35]{Edgar:02}). In this case, it follows from \cite[Theorem~2.2.11]{Edgar:02} that $L_\Phi(\mu)$ with the Orlicz norm is the norm dual of $H_\Psi(\mu)$ with the Luxemburg norm.  Namely,  $L_\Phi(\mu)=H_\Psi(\mu)^*$, where duality is given by integration. Note also that $H_\Psi(\mu)$ is order continuous (\cite[Theorem~2.1.14]{Edgar:02}).

\begin{remark}
Recall from \cite{Biagini:10} that the weak topology on a Banach lattice $X$ satisfies the C-property. Indeed, if $C $ is a convex set in $X$ and $x\in
\overline{C}^w$, then by Mazur's theorem,
$x\in\overline{C}^w=\overline{C}^{\norm{\cdot}}$, so that there exists a sequence $(x_n)$ in $C$ such that
$\norm{x_n-x}\rightarrow0$. Now \cite[Lemma~3.11]{GAOX:14} or \cite[Lemma~4]{Biagini:10}
yields a subsequence $(x_{n_k})$ of $(x_n)$ such that
$x_{n_k}\xrightarrow{o}x$ in $X$.
\end{remark}

\begin{theorem}
Let $\Psi$ be the conjugate of the Orlicz function $\Phi$. Suppose either $\Psi$ satisfies the $\Delta_2$-condition at $0$ and $\infty$ or ``$\mu$ is finite and $\Psi$ satisfies the $\Delta_2$-condition at $\infty$''. Then $\sigma(L_\Phi(\mu),L_\Phi(\mu)_n^\sim) $ satisfies the C-property if and only if $L_\Phi(\mu)$ is reflexive if and only if $L_\Phi(\mu)$ is order continuous. In particular, $\sigma(L_\infty(\mu),L_\infty(\mu)_n^\sim)$ satisfies the $C$-property if and only if $L_\infty(\mu)$ is finite-dimensional.
\end{theorem}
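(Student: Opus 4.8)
The plan is to prove the chain ``C-property $\Rightarrow$ reflexive $\Rightarrow$ order continuous $\Rightarrow$ C-property'' and then read off the $L_\infty$ statement. The first move is to pin down the topology $\sigma(L_\Phi(\mu),L_\Phi(\mu)_n^\sim)$. Since $\Psi$ satisfies $\Delta_2$ (at $0$ and $\infty$, or at $\infty$ with $\mu$ finite) it is finite-valued, so $\lim_{t\to\infty}\Phi(t)/t=\infty$ and hence $L_\Phi(\mu)=H_\Psi(\mu)^*$ as recorded above. The order continuous dual of $L_\Phi(\mu)$ is its K\"othe dual $L_\Psi(\mu)$, and the $\Delta_2$-hypothesis gives $L_\Psi(\mu)=H_\Psi(\mu)$; thus $L_\Phi(\mu)_n^\sim=H_\Psi(\mu)$ and the topology in question is exactly the weak$^*$ topology $\sigma(H_\Psi(\mu)^*,H_\Psi(\mu))$ on $L_\Phi(\mu)$. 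I would isolate this identification as the first lemma of the proof.

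Next I would dispose of ``reflexive $\Leftrightarrow$ order continuous'', which under the standing hypothesis is routine: $L_\Phi(\mu)$ is order continuous iff $\Phi\in\Delta_2$, while it is reflexive iff both $\Phi$ and $\Psi$ lie in $\Delta_2$; since $\Psi\in\Delta_2$ is assumed, the two conditions coincide (and reflexivity always forces order continuity). For the easy implication ``order continuous $\Rightarrow$ C-property'', note that order continuity means $L_\Phi(\mu)^*=L_\Phi(\mu)_n^\sim$, so $\sigma(L_\Phi(\mu),L_\Phi(\mu)_n^\sim)$ is simply the weak topology, which has the C-property by the Remark.

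The substance is the implication ``C-property $\Rightarrow$ order continuous'', which I would prove contrapositively: assuming $L_\Phi(\mu)$ is not order continuous, exhibit a convex set failing the defining property. Two structural facts drive the reduction. First, if $x_n\xrightarrow{o}x$ then the sequence is order bounded (dominated by a single element of $L_\Phi(\mu)$) and, since the functionals in $L_\Phi(\mu)_n^\sim=H_\Psi(\mu)$ are order continuous, $x_n\xrightarrow{w^*}x$; second, weak$^*$-convergent sequences are norm bounded. Hence it suffices to produce a convex set $C$ together with a point $x\in\overline{C}^{\,w^*}$ that is not the $w^*$-limit of any sequence drawn from $C$, i.e.\ a point in the $w^*$-closure of $C$ lying outside its $w^*$-sequential closure, for then no sequence in $C$ can order-converge to $x$. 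Using the failure of order continuity I would start from a normalized, positive, pairwise disjoint, order-bounded sequence $(d_n)$ in $L_\Phi(\mu)$ (which exists precisely because order continuity fails) and build a von Neumann--type ``escaping'' family with diverging coefficients, arranged so that the target is approached in $w^*$ only through elements of large norm.

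The hard part will be ensuring that convexity cannot be exploited to sneak up on the target: a priori one may form convex combinations, and a Koml\'os/Ces\`aro averaging argument can manufacture norm-bounded, a.e.-convergent---hence order-convergent---sequences out of such escaping families, which would reach the target after all. Defeating this is exactly where non-order-continuity must be used quantitatively: the coefficients have to blow up the norm (equivalently, destroy domination) under every averaging scheme, so that no order-convergent sequence survives. I expect this verification---and in particular the degenerate case $L_\infty(\mu)$, where order boundedness is automatic and one must rely purely on the non-sequential nature of the $w^*$-topology on unbounded sets---to be the main obstacle. Granting the construction, the final assertion follows by specialization: for $L_\infty(\mu)$ the defining function has conjugate $\Psi(s)=s\in\Delta_2$ and $\mu$ is finite, so the equivalence applies and ``order continuous'' becomes ``finite-dimensional'', since $L_\infty(\mu)$ is order continuous exactly when $\mu$ has finitely many atoms and no atomless part.
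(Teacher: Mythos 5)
Your skeleton is the same as the paper's: identify $\sigma(L_\Phi(\mu),L_\Phi(\mu)_n^\sim)$ with the weak$^*$ topology of $L_\Phi(\mu)=H_\Psi(\mu)^*$, get ``reflexive $\Leftrightarrow$ order continuous'' (the paper does this by showing $H_\Psi(\mu)$ is a KB-space and citing \cite[Theorem~4.70]{ALIP:06}, which is cleaner than juggling $\Delta_2$-characterizations of reflexivity, but your route is workable), handle ``order continuous $\Rightarrow$ C-property'' via Mazur, and reduce the hard direction to exhibiting a convex set $C$ and a point of $\overline{C}^{w^*}$ that is not the $w^*$-limit of any sequence from $C$. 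That reduction is exactly right, since order convergence implies $\sigma(L_\Phi(\mu),L_\Phi(\mu)_n^\sim)$-convergence.

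The genuine gap is that you never produce such a set: you sketch a ``von Neumann--type escaping family'' built from a disjoint order-bounded sequence, and then you yourself flag that Ces\`aro/Koml\'os averaging over the convex hull may defeat it and that the $L_\infty$ case looks problematic. That worry is well founded --- for a hand-built countable family the convex hull is large and controlling its entire $w^*$-sequential closure is exactly the hard part --- and ``granting the construction'' is not a proof. The idea you are missing is that no ad hoc construction is needed: if $L_\Phi(\mu)=H_\Psi(\mu)^*$ is not order continuous, then by \cite[Theorem~4.69]{ALIP:06} $\ell_1$ embeds \emph{complementably} in the predual $H_\Psi(\mu)$, so $L_\Phi(\mu)=\ell_\infty\oplus Z^*$ with the $\sigma(\ell_\infty,\ell_1)$-topology on the $\ell_\infty$-summand induced by the weak$^*$ topology of $L_\Phi(\mu)$. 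Since $\ell_1$ is separable and non-quasi-reflexive, Ostrovskij's theorem (\cite[Theorem~2.34]{Hajek}) supplies a \emph{linear subspace} $W\subseteq\ell_\infty$ that is $\sigma(\ell_\infty,\ell_1)$-dense but whose $\sigma(\ell_\infty,\ell_1)$-sequential closure misses some $x$; a linear subspace is convex, so $W$ is precisely the counterexample your reduction calls for, and the quantitative averaging issues you anticipated are absorbed into that known theorem. Without this (or an equivalent completed construction), the central implication of the theorem is unproved.
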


\begin{proof}
Observe that $\Psi$ is finite now (cf.~\cite[last paragraph on p.\,44]{Edgar:02}). Thus we have $L_\Phi(\mu)=H_\Psi(\mu)^*$. We now claim that $H_\Psi(\mu)$ is a KB-space. This, together with \cite[Theorem~4.70]{ALIP:06}, implies that $L_\Phi(\mu)$ is reflexive if and only if it is a KB-space, and being a dual space already, if and only if it is order continuous. For the proof of the claim, observe first that $H_\Psi(\mu)=L_\Psi(\mu)$ by \cite[Theorem~2.1.17]{Edgar:02}. Let $(f_n)$ be a norm bounded increasing sequence in $L_\Psi(\mu)_+$. Then its pointwise limit $f$ belongs to $L_\Psi(\mu)$, by \cite[Theorem~2.1.11(c)]{Edgar:02}. It is clear that $f-f_n\downarrow 0$ in $L_\Psi(\mu)$; thus, $\norm{f_n-f}_{\Psi}\rightarrow 0$, by order continuity of $L_\Psi(\mu)=H_\Psi(\mu)$. This proves the claim.

Now if $L_\Phi(\mu)$ is order continuous, then $L_\Phi(\mu)_n^\sim=L_\Phi(\mu)^*$, and thus the topology $\sigma(L_\Phi(\mu),L_\Phi(\mu)_n^\sim) $ is just the weak topology on $L_\Phi(\mu)$. Therefore, it has the C-property, by the previous remark.
Assume now that $\sigma(L_\Phi(\mu),L_\Phi(\mu)_n^\sim) $ has the C-property. We prove that $L_\Phi(\mu)$ is order continuous. Suppose, otherwise, that $L_\Phi(\mu)$ is not order continuous.
Then $\ell_1$ embeds complementably in $H_\Psi(\mu)$ by
\cite[Theorem~4.69]{ALIP:06}, so that $H_\Psi(\mu)=\ell_1 \oplus Z$ for
some closed subspace $Z$, and $L_\Phi(\mu)=\ell_\infty \oplus Z^*$. Since
$\ell_1$ is a non-quasi-reflexive separable Banach space, we
have, by Ostrovskij's Theorem (\cite[Theorem~2.34]{Hajek}), that
there exists a subspace $W$ in $\ell_\infty$ such that
$\overline{W}^{\sigma(\ell_\infty,\ell_1)}=\ell_\infty$ and for
some $x \in \ell_\infty$, no sequence in $W$ can converge to $x$ in the
$\sigma(\ell_\infty,\ell_1)$-topology. It is straightforward verifications that $\overline{W}^{\sigma(\ell_\infty,\ell_1)}=\overline{W}^{\sigma(L_\Phi(\mu),H_\Psi(\mu))}$. Moreover, since $H_\Psi(\mu)$ is a KB-space, we have that $H_\Psi(\mu)=L_\Phi(\mu)_n^\sim$. Therefore, $$x\in
\overline{W}^{\sigma(\ell_\infty,\ell_1)}=\overline{W}^{\sigma(L_\Phi(\mu),H_\Psi(\mu))}=\overline{W}^{\sigma(L_\Phi(\mu),L_\Phi(\mu)_n^\sim)}.$$ The C-property yields a sequence $(x_n)$ in $W$ such that
$x_n\xrightarrow{o}x$ in $L_\Phi(\mu)$. It follows that $\langle x_n,g\rangle\rightarrow \langle x,g\rangle$ for all $g\in H_\Psi(\mu)$, by $H_\Psi(\mu)=L_\Phi(\mu)_n^\sim$ again. In particular, $\langle x_n,y\rangle\rightarrow \langle x,y\rangle$ for all $y\in \ell_1$, i.e.~ $x_n\xrightarrow{\sigma(\ell_\infty,\ell_1)}x$, contradicting the choice of $x$. This proves that $L_\Phi(\mu)$ is order continuous.

Finally, put $\Phi(t)=0$ for $0\leq t\leq 1$ and $\Phi(t)=\infty$ otherwise. Then $L_\Phi(\mu)=L_\infty(\mu)$, and $\Psi(s)=s$ for all $s\in [0,\infty)$, in particular, $\Psi$ satisfies the $\Delta_2$-condition at $0$ and $\infty$.
Therefore, $\sigma(L_\infty(\mu),L_\infty(\mu)_n^\sim)$ satisfies the
$C$-property if and only if $L_\infty(\mu)$ is reflexive if and only if $L_\infty(\mu)$ is finite-dimensional (cf. \cite[Theorem~5.83]{ALIP:06} and \cite[Theorem~5.85]{ALIP:06} for AM-spaces).
\end{proof}

We now establish a weaker form of the $C$-property for Orlicz spaces $L_\Phi(\mu)$ with $\lim_{t\rightarrow \infty}\frac{\Phi(t)}{t}=\infty$ with respect to the weak-star topology.

\begin{theorem}\label{uo-wstar}
Let $\Phi$ be an Orlicz function such that $\lim_{t\rightarrow \infty}\frac{\Phi(t)}{t}=\infty$. Then for any convex set $C$ of $L_\Phi(\mu)$ and any $f\in\overline{C}^{w^*}$, there exists a sequence $(f_n)$ in $C$ such
that $f_n\xrightarrow{a.e.}f$.
\end{theorem}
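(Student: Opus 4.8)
The plan is to reduce the $w^*$-topology to the weak topology of a suitable $L_1$-space, where Mazur's theorem becomes available. Since $\lim_{t\to\infty}\frac{\Phi(t)}{t}=\infty$ forces $\Psi$ to be finite, we have $L_\Phi(\mu)=H_\Psi(\mu)^*$, so $w^*$ is exactly $\sigma(L_\Phi(\mu),H_\Psi(\mu))$. First I would produce a weight $g_0\in H_\Psi(\mu)$ with $g_0>0$ $\mu$-a.e. and $\int_\Omega g_0\,\mathrm{d}\mu<\infty$. Using $\sigma$-finiteness, write $\Omega=\bigsqcup_n A_n$ with $0<\mu(A_n)<\infty$ and set $g_0=\sum_n c_n\one_{A_n}$ for suitably small $c_n>0$. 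A diagonal choice securing both $\Psi(c_n n)\mu(A_n)\le 2^{-n}$ and $c_n\mu(A_n)\le 2^{-n}$ makes $\int_\Omega\Psi(g_0/\lambda)\,\mathrm{d}\mu<\infty$ for every $\lambda>0$: it suffices to check $\lambda=1/m$, $m\in\mathbb N$, and for such $\lambda$ one has $\Psi(g_0/\lambda)\le\Psi(m g_0)$ pointwise, whence $\int\Psi(mg_0)\,\mathrm{d}\mu=\sum_n\Psi(mc_n)\mu(A_n)$ has tail dominated by $\sum_{n\ge m}2^{-n}$ and finitely many finite head terms. Thus $g_0\in H_\Psi(\mu)$, while $\int g_0\,\mathrm{d}\mu\le 1$. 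Let $\nu=g_0\,\mu$, a finite measure equivalent to $\mu$.

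Next I would record the two compatibility facts driving the argument. By the generalized Hölder inequality, $\int_\Omega\abs{f}\,\mathrm{d}\nu=\int_\Omega\abs{f}g_0\,\mathrm{d}\mu\le 2\norm{f}_\Phi\norm{g_0}_\Psi<\infty$, so the inclusion $i:L_\Phi(\mu)\hookrightarrow L_1(\nu)$ is a well-defined, injective, norm-continuous linear map. Moreover, for every $h\in L_\infty(\nu)=L_\infty(\mu)$ one has $hg_0\in H_\Psi(\mu)$, since $\abs{hg_0}\le\norm{h}_\infty g_0$ and $H_\Psi(\mu)$ is an ideal; and the two pairings agree, $\langle i(f),h\rangle_\nu=\int_\Omega fhg_0\,\mathrm{d}\mu=\langle f,hg_0\rangle_\mu$. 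The point of this identity is that testing an $L_\Phi(\mu)$-net against $H_\Psi(\mu)$ controls testing its image against all of $L_\infty(\nu)=L_1(\nu)^*$, so $i$ is continuous from $\bigl(L_\Phi(\mu),\sigma(L_\Phi,H_\Psi)\bigr)$ into $\bigl(L_1(\nu),\sigma(L_1(\nu),L_\infty(\nu))\bigr)$.

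With these in hand the conclusion is quick. Given $f\in\overline{C}^{w^*}$, choose a net $(f_\alpha)$ in $C$ with $f_\alpha\to f$ in $\sigma(L_\Phi,H_\Psi)$; then for each $h\in L_\infty(\nu)$ we get $\langle i(f_\alpha),h\rangle_\nu=\langle f_\alpha,hg_0\rangle_\mu\to\langle f,hg_0\rangle_\mu=\langle i(f),h\rangle_\nu$, so $i(f)$ lies in the weak closure of the convex set $i(C)\subseteq L_1(\nu)$. By Mazur's theorem this weak closure coincides with the norm closure, so there is a sequence $(f_n)$ in $C$ with $\norm{f_n-f}_{L_1(\nu)}\to 0$; passing to a subsequence yields $f_n\to f$ $\nu$-a.e., equivalently $\mu$-a.e. (as $\nu\sim\mu$), which is the desired conclusion after relabelling.

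I expect the only real obstacle to be the construction of the strictly positive weight $g_0\in H_\Psi(\mu)$ with finite $\mu$-integral: the coefficients $c_n$ must be chosen so that membership in the heart holds for all scaling parameters $\lambda$ simultaneously, which is exactly what the diagonal estimate $\Psi(c_n n)\mu(A_n)\le 2^{-n}$ secures, and here the finiteness of $\Psi$ (i.e.\ $\lim_{t\to\infty}\frac{\Phi(t)}{t}=\infty$) is what guarantees each head term $\Psi(mc_n)\mu(A_n)$ is finite. Everything downstream — the Hölder inclusion, the pairing identity $\langle i(f),h\rangle_\nu=\langle f,hg_0\rangle_\mu$, the continuity of $i$, and the Mazur step — is routine.
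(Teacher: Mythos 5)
Your proof is correct, and it takes a genuinely different route from the paper's. The paper stays inside $L_\Phi(\mu)$: it invokes the absolute weak topology $\abs{\sigma}(L_\Phi(\mu),H_\Psi(\mu))$ (whose dual is still $H_\Psi(\mu)$ because $H_\Psi(\mu)$ is an ideal of $L_\Phi(\mu)^*$), applies Mazur's theorem for that locally convex-solid topology to get a net with $\langle\abs{f_\alpha-f},g_0\rangle\to 0$, and then runs an order-theoretic computation with $\sup_{m\geq n}(\abs{f_{\alpha_m}-f}\wedge f_0)$ — requiring a strictly positive $f_0\in L_\Phi(\mu)$ as well as a strictly positive $g_0\in H_\Psi(\mu)$ — to extract a.e.\ convergence. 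You instead build a strictly positive, $\mu$-integrable $g_0\in H_\Psi(\mu)$, pass to the finite measure $\nu=g_0\,\mu$, and check that the inclusion $L_\Phi(\mu)\hookrightarrow L_1(\nu)$ is continuous from $\sigma(L_\Phi,H_\Psi)$ to $\sigma(L_1(\nu),L_\infty(\nu))$ via the pairing identity $\langle i(f),h\rangle_\nu=\langle f,hg_0\rangle_\mu$; then the classical Mazur theorem in $L_1(\nu)$ plus the standard subsequence extraction from $L_1$-norm convergence finishes the job. All the steps check out: the diagonal choice $\Psi(c_n n)\mu(A_n)\leq 2^{-n}$ does put $g_0$ in the heart (continuity of the finite function $\Psi$ at $0$ makes the choice possible), $L_\infty(\nu)=L_\infty(\mu)$ since $\nu\sim\mu$, and $hg_0\in H_\Psi(\mu)$ by the ideal property. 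Your argument is more elementary — it avoids the machinery of locally solid topologies and the lattice limsup computation — and it actually delivers more, namely $L_1(\nu)$-norm convergence of the approximating sequence rather than just a.e.\ convergence. The trade-off is that it is tied to the concrete measure-space structure through $\nu$, whereas the paper's order-theoretic argument is the one that generalizes to abstract Banach lattices with a.e.\ convergence replaced by unbounded order convergence, as the paper remarks at the end.
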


\begin{proof}
We first observe that $L_\Phi(\mu)$ contains a positive function $f_0$ which is everywhere non-zero. Indeed, since $\Phi$ is not identically $\infty$, it is easily seen that $\chi_A\in L_\Phi(\mu)$ for any measurable set $A$ of finite measure. By $\sigma$-finiteness of $\mu$, we can decompose $\Omega=\cup_nA_n$ where all $A_n$'s have finite measures. Now take a sequence $(\delta_n)$ of small enough positive real numbers such that $\sum_n\delta_n\chi_{A_n}$ converges in $L_\Phi(\mu)$, then $f_0:=\sum_n\delta_n\chi_{A_n}$ is as desired. This idea, in junction with \cite[Theorem~2.1.14(b)]{Edgar:02}, also yields that $H_\Psi(\mu)$ contains a positive function $g_0$ which is everywhere non-zero.

Being order continuous, $H_\Psi(\mu)$ is an ideal of $L_\Phi(\mu)^*$, by \cite[Theorem~4.9]{ALIP:06}. Thus, the topological dual of $\big(L_\Phi(\mu),\abs{\sigma}(L_\Phi(\mu),H_\Psi(\mu))\big)$ is precisely $H_\Psi(\mu)$, by \cite[Theorem~3.50]{ALIP:06}. Therefore, it follows from Mazur's theorem (cf.~\cite[Theorem~3.13]{ALIP:06}) that $$f\in \overline{C}^{w^*}=\overline{C}^{\sigma(L_\Phi(\mu),H_\Psi(\mu))}=\overline{C}^{\abs{\sigma}(L_\Phi(\mu),H_\Psi(\mu))}.$$
Consequently, there exists a net $(f_\alpha)$ in $C$ such that $f_\alpha\rightarrow f$ in the $\abs{\sigma}(L_\Phi(\mu),H_\Psi(\mu))$-topology, that is, $\langle\abs{f_\alpha-f},g\rangle\rightarrow0$ for any $g\in H_\Psi(\mu)_+$. In particular, $\langle\abs{f_\alpha-f},g_0\rangle\rightarrow0$.
Take $(\alpha_n)$ such that $\langle\abs{f_{\alpha_n}-f},g_0\rangle\leq\frac{1}{2^n}$ for all $n\geq 1$.
Note that $\bigvee_{m=n}^k(\abs{f_{\alpha_m}-f}\wedge f_0)\big\uparrow_k \sup_{m\geq n}(\abs{f_{\alpha_m}-f}\wedge f_0)$. Thus since $g_0\in H_\Psi(\mu)\subset L_\Phi(\mu)_n^\sim$, we have that 
\begin{align*}
\Big\langle\sup_{m\geq n}(\abs{f_{\alpha_m}-f}\wedge f_0),g_0\Big\rangle=&
\lim_k \Big\langle\bigvee_{m=n}^k(\abs{f_{\alpha_m}-f}\wedge f_0),g_0\Big\rangle\leq\lim_k\Big\langle\sum_{m=n}^k(\abs{f_{\alpha_m}-f}\wedge f_0),g_0\Big\rangle\\
\leq& \frac{1}{2^{n-1}}.
\end{align*}
It follows that
$\int_\Omega \big( \inf_{n\geq 1}\sup_{m\geq n}(\abs{f_{\alpha_m}-f}\wedge f_0)\big)g_0\mathrm{d}\mu=\big\langle \inf_{n\geq 1}\sup_{m\geq n}(\abs{f_{\alpha_m}-f}\wedge f_0),g_0\big\rangle=0$,
implying that $\limsup_n(\abs{f_{\alpha_n}-f}\wedge f_0)=\inf_{n\geq 1}\sup_{m\geq n}(\abs{f_{\alpha_m}-f}\wedge f_0)=0$ a.e. Since $f_0>0$ everywhere, it follows immediately that $f_{\alpha_n}\xrightarrow{a.e.}f$.
\end{proof}

We are now ready to present the following $w^*$-representation theorem for proper convex increasing functionals. It extends Theorem~\ref{del}\eqref{del2}; simply recall that if $\Phi=0$ on $[0,1]$ and $\infty$ elsewhere, then $L_\Phi(\mu)=L_\infty(\mu)$.

\begin{theorem}\label{orlicz-rep}
Let $\Phi$ be an Orlicz function such that $\lim_{t\rightarrow \infty}\frac{\Phi(t)}{t}=\infty$. For any proper convex increasing functional $\phi:L_\Phi(\mu)\rightarrow(-\infty,\infty]$, the following are equivalent.
\begin{enumerate}
\item\label{w*-rep1} $\phi$ is $w^*$-lower semi-continuous.
\item\label{w*-rep2} $\phi$ admits the representation $\phi(f)=\sup_{g\in H_\Psi(\mu)_+}\left(\int_\Omega fg\mathrm{d}\mu -\phi^*(g)\right)$, for any $f\in L_\Phi(\mu)$, where $\phi^*(g)=\sup_{f\in L_\Phi(\mu)}\left(\int_\Omega fg\mathrm{d}\mu-\phi(f)\right)$, for each $g\in H_\Psi(\mu)_+$.
\item\label{w*-rep3} $\phi(f)\leq \liminf_n \phi(f_n)$ whenever $\sup_n\norm{f_n}_\Phi<\infty$ and $f_n\xrightarrow{a.e.}f$.
\end{enumerate}
\end{theorem}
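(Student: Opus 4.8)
The plan is to run the cycle \eqref{w*-rep1}$\Rightarrow$\eqref{w*-rep2}$\Rightarrow$\eqref{w*-rep3}$\Rightarrow$\eqref{w*-rep1}, using throughout that $\lim_{t\to\infty}\frac{\Phi(t)}{t}=\infty$ forces $\Psi$ to be finite, so that $L_\Phi(\mu)=H_\Psi(\mu)^*$, the $w^*$-topology is $\sigma(L_\Phi(\mu),H_\Psi(\mu))$, and $H_\Psi(\mu)$ is order continuous. The equivalence \eqref{w*-rep1}$\Leftrightarrow$\eqref{w*-rep2} is essentially Fenchel--Moreau duality, and \eqref{w*-rep2}$\Rightarrow$\eqref{w*-rep3} is a routine ``take $\liminf$ in the representation'' argument once the correct convergence lemma is in place; the genuinely delicate implication is \eqref{w*-rep3}$\Rightarrow$\eqref{w*-rep1}.

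For \eqref{w*-rep1}$\Rightarrow$\eqref{w*-rep2}, since $\phi$ is proper, convex and $w^*$-lsc, the Fenchel--Moreau theorem for the dual pair $\langle L_\Phi(\mu),H_\Psi(\mu)\rangle$ gives $\phi=\phi^{**}$, i.e.\ $\phi(f)=\sup_{g\in H_\Psi(\mu)}(\langle f,g\rangle-\phi^*(g))$. I would then discard the $g\notin H_\Psi(\mu)_+$ by monotonicity: if $g\ngeq 0$, the set $\{g<0\}$ has positive measure and, by $\sigma$-finiteness, contains some $E$ with $0<\mu(E)<\infty$, so $\chi_E\in L_\Phi(\mu)$ and $\langle\chi_E,g\rangle<0$. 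Fixing $f_0\in\mathrm{dom}\,\phi$ and using $\phi(f_0-t\chi_E)\le\phi(f_0)$ (as $\phi$ is increasing), one gets $\phi^*(g)\ge\langle f_0,g\rangle-t\langle\chi_E,g\rangle-\phi(f_0)\to\infty$ as $t\to\infty$; hence only $g\in H_\Psi(\mu)_+$ survive in the supremum.

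For \eqref{w*-rep2}$\Rightarrow$\eqref{w*-rep3}, the key auxiliary fact is that a norm-bounded, a.e.-convergent sequence tests correctly against the heart: if $\sup_n\norm{f_n}_\Phi<\infty$ and $f_n\xrightarrow{a.e.}f$, then $\langle f_n,g\rangle\to\langle f,g\rangle$ for every $g\in H_\Psi(\mu)$. I would prove this by reducing to $g\ge 0$ and applying Vitali's convergence theorem to $\abs{f_n-f}g$, using the Orlicz--H\"older estimate $\int_E\abs{f_n-f}g\,\mathrm{d}\mu\le C\,(\sup_n\norm{f_n}_\Phi)\,\norm{g\chi_E}_\Psi$ together with the order continuity of $H_\Psi(\mu)$ (absolute continuity of the norm of $g$) to supply uniform integrability and tightness. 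Granting this, the representation yields $\phi(f_n)\ge\langle f_n,g\rangle-\phi^*(g)$, so $\liminf_n\phi(f_n)\ge\langle f,g\rangle-\phi^*(g)$ for every $g\in H_\Psi(\mu)_+$, and taking the supremum over $g$ gives $\liminf_n\phi(f_n)\ge\phi(f)$.

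The implication \eqref{w*-rep3}$\Rightarrow$\eqref{w*-rep1}, which I expect to be the main obstacle, amounts to showing each convex sublevel set $S_c=\{f:\phi(f)\le c\}$ is $w^*$-closed. Theorem~\ref{uo-wstar} is almost exactly the tool needed, but it only delivers an a.e.-convergent approximating sequence with \emph{no} norm control, whereas hypothesis \eqref{w*-rep3} is only usable on norm-bounded sequences. The device that bridges this gap is the Krein--Smulian theorem: a convex subset of the dual space $L_\Phi(\mu)=H_\Psi(\mu)^*$ is $w^*$-closed as soon as its intersection with every closed ball $rB=\{f:\norm{f}_\Phi\le r\}$ is $w^*$-closed. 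So I would fix $r>0$, take $f\in\overline{S_c\cap rB}^{\,w^*}$, and apply Theorem~\ref{uo-wstar} to the convex set $S_c\cap rB$ to obtain a sequence $(f_n)$ in $S_c\cap rB$ with $f_n\xrightarrow{a.e.}f$; now automatically $\sup_n\norm{f_n}_\Phi\le r$. Then \eqref{w*-rep3} gives $\phi(f)\le\liminf_n\phi(f_n)\le c$, while the Fatou property of the Luxemburg norm gives $\norm{f}_\Phi\le\liminf_n\norm{f_n}_\Phi\le r$, so $f\in S_c\cap rB$. Thus $S_c\cap rB$ is $w^*$-closed for every $r$, and Krein--Smulian concludes that $S_c$ is $w^*$-closed, i.e.\ $\phi$ is $w^*$-lsc.
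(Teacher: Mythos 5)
Your proposal is correct and its skeleton coincides with the paper's: Fenchel--Moreau duality plus the observation that $\phi^*(g)=\infty$ whenever $g\not\geq 0$ (via $\phi(f_0-t\chi_E)\leq\phi(f_0)$) handles \eqref{w*-rep1}$\Leftrightarrow$\eqref{w*-rep2}; the convergence lemma ``norm-bounded a.e.\ convergence implies $w^*$-convergence'' drives the passage to \eqref{w*-rep3}; and the Krein--Smulian reduction to balls combined with Theorem~\ref{uo-wstar} gives \eqref{w*-rep3}$\Rightarrow$\eqref{w*-rep1}, exactly as in the paper. The only genuine differences are organizational and technical. First, you run the cycle \eqref{w*-rep1}$\Rightarrow$\eqref{w*-rep2}$\Rightarrow$\eqref{w*-rep3}, extracting \eqref{w*-rep3} from the representation by taking $\liminf$ termwise, whereas the paper deduces \eqref{w*-rep1}$\Rightarrow$\eqref{w*-rep3} directly from $w^*$-lower semicontinuity once the convergence lemma is in hand; both are fine. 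Second, you prove the convergence lemma by Vitali's theorem, with uniform integrability and tightness of $\{\abs{f_n-f}\,\abs{g}\}$ supplied by the H\"older estimate and the absolute continuity of the norm of $g\in H_\Psi(\mu)$, whereas the paper argues lattice-theoretically: it truncates $\abs{f_n-f}$ at a single $f_0\in L_\Phi(\mu)_+$ furnished by \cite[Theorem~4.18]{ALIP:06}, notes that the truncated sequence order-converges to $0$, and pairs with $g\in H_\Psi(\mu)\subset L_\Phi(\mu)_n^\sim$. Both proofs of the lemma are sound; yours is more measure-theoretically self-contained, while the paper's transfers verbatim to abstract Banach lattices with a.e.\ convergence replaced by unbounded order convergence, which is precisely the generalization the authors flag in their closing remark.
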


\begin{proof}
The proof of \eqref{w*-rep1}$\Leftrightarrow$\eqref{w*-rep2} is standard; we include a proof here for the convenience of the reader.
Applying Fenchel's formula (\cite[Theorem~1.11]{Brezis:11}) to $(L_\Phi(\mu),w^*)$, we have that $\phi$ is $w^*$-lower semi-continuous if and only if $\phi(f)=\sup_{g\in H_\Psi(\mu)}(\langle
f,g\rangle-\phi^*(g))$ for any $f\in L_\Phi(\mu)$, where $\phi^*(g)=\sup_{f\in L_\Phi(\mu)}(\langle f,g\rangle-\phi(f))$ for each $g\in H_\Psi(\mu)$.
Thus, for the equivalence of \eqref{w*-rep1} and \eqref{w*-rep2}, it is sufficient to prove that if $\phi^*(g)<\infty$ then $g\geq 0$ a.e.
Suppose, otherwise, that $\phi^*(g)<\infty$ but $g< 0$ on a set $A$ of positive measure. Without loss of generality, assume that $A$ has finite measure, and put $f=\chi_A$. Then $f\in L_\Phi(\mu)_+$ and $\langle f,g\rangle<0$. Pick $\tilde{f}\in L_\Phi(\mu)$ such that $\phi(\tilde{f})<\infty$.
By definition of $\phi^*$, for any real number $\lambda<0$, we have $\lambda \langle f,g\rangle +\langle \tilde{f},g\rangle=\langle \lambda f+\tilde{f},g\rangle\leq \phi^*(g)+\phi(\lambda f+\tilde{f})\leq \phi^*(g)+\phi(\tilde{f})<\infty $. Letting $\lambda\rightarrow-\infty$, we get a contradiction.

Assume now \eqref{w*-rep1} holds. 
Let $f\in L_\Phi(\mu)$ and $(f_n)$ be a norm bounded sequence in $L_\Phi(\mu)$ such that $f_n\xrightarrow{a.e.}f$. 
We claim that $f_n\xrightarrow{w^*}f$ in $L_\Phi(\mu)$.\footnote{A special case of this claim can be found in \cite[Proposition 6, p.\,148]{Rao:91}.} Indeed, for any $g\in H_\Psi(\mu)$ and any $\varepsilon>0$, by \cite[Theorem~4.18]{ALIP:06}, there exists $f_0\in L_\Phi(\mu)_+$ such that $$\langle\abs{f_n-f},\abs{g}\rangle-\langle\abs{f_n-f}\wedge f_0,\abs{g}\rangle=\langle(\abs{f_n-f}-f_0)^+,\abs{g}\rangle<\varepsilon, \mbox{ for all }n\geq 1;$$here we use the identity $a-a\wedge b=(a-b)^+$. 
Therefore,
$$\bigabs{\langle f_n-f,g\rangle} \leq \langle \abs{f_n-f},\abs{g}\rangle<\langle\abs{f_n-f}\wedge f_0,\abs{g}\rangle+\varepsilon.$$
Now put $\widetilde{f}_n:=\sup_{m\geq n}(\abs{f_m-f}\wedge f_0)$. Since $0\leq \widetilde{f}_n\leq f_0$, we have that $\widetilde{f}_n\in L_\Phi(\mu)$; since $f_n\xrightarrow{a.e.} f$, we have that $\widetilde{f}_n\downarrow0$. 
Therefore, it follows from
$\abs{f_n-f}\wedge f_0\leq \widetilde{f}_n$ that $\abs{f_n-f}\wedge f_0\xrightarrow{o}0$ in $L_\Phi(\mu)$. 
Now since $\abs{g}\in H_\Psi(\mu)\subset L_\Phi(\mu)_n^\sim$, it follows that $\langle\abs{f_n-f}\wedge f_0,\abs{g}\rangle\rightarrow 0$, and therefore, $\limsup_n\bigabs{\langle f_n-f,g\rangle} \leq\varepsilon$. 
By arbitrariness of $\varepsilon$, we obtain that $\lim_n\langle f_n-f,g\rangle=0$. This proves the claim. Now by $w^*$-lower semi-continuity of $\phi$, we have $\phi(f)\leq \liminf_n\phi(f_n)$. 
This proves \eqref{w*-rep1}$\Rightarrow$\eqref{w*-rep3}.

Suppose now \eqref{w*-rep3} holds. 
For the implication \eqref{w*-rep3}$\Rightarrow$\eqref{w*-rep1}, we need to prove that the set $A_\lambda:=\{f\in L_\Phi(\mu): \phi(f)\leq \lambda\}$ is $w^*$-closed for each $\lambda\in\mathbb{R}$. 
Observe first that each $A_\lambda$ is convex. Hence, by \cite[Theorem~4.44]{Fabian:01}, in order to prove it is $w^*$-closed, it is sufficient to prove that $A_\lambda\cap mB$ is $w^*$-closed for each $m\geq 1$, where $B$ is the closed unit ball of $L_\Phi(\mu)$. 
Indeed, for any $f\in \overline{A_\lambda\cap mB}^{w*}$, Theorem~\ref{uo-wstar} yields a sequence $(f_n)$ in $A_\lambda\cap mB$ such that $f_n\xrightarrow{a.e.}f$. 
It follows from assumption that $\phi(f)\leq \liminf_n \phi(f_n)\leq \lambda$, so that $f\in A_\lambda$. 
By the standard fact that closed balls are $w^*$-closed, we also have that $f\in mB$, and therefore, $f\in A_\lambda\cap mB$. 
It follows that $A_\lambda\cap mB$ is $w^*$-closed.
\end{proof}

\begin{remark}
The condition $\lim_{t\rightarrow \infty}\frac{\Phi(t)}{t}=\infty$ is very mild. For example, when considering Orlicz spaces on a probability space $(\Omega,\mathscr{F},\mathbb{P})$, which is a frequently used framework in Mathematical Finance, this condition is satisfied whenever $L_\Phi(\mathbb{P})\neq L_1(\mathbb{P})$ (cf.~\cite[Proposition~2.2.6(1) and (2.1.21) on p.\,48]{Edgar:02}). We also mention that the space $L_1[0,1]$ is excluded for a good reason: it is never a dual space.
\end{remark}

Finally, we remark that most of our results hold in the general framework of Banach lattices; here we need to replace a.e.~convergence with the notion of unbounded order convergence, which is recently developed in \cite{GAO:14,GAOX:14,GAOTX:15}.

\bigskip

\textbf{Acknowledgements.} The authors would like to thank Dr.~Patrick Beissner for bringing the reference \cite{Biagini:10} into their attention. The first author would also like to thank Ryerson University for the hospitality received during his visit.

\end{document}